\newtheorem{theorem}{Theorem}[section]
\newtheorem{lemma}[theorem]{Lemma}
\newcommand{\nub}{\boldsymbol\nu}
\newcommand{\E}{{\bf E}}
\newcommand{\grph}{G(\nub, \E)}
\newcommand{\spn}{\mathfrak{sp}(2n,\mathbb C)}
\newcommand{\spnone}{\mathfrak{sp}(2n+2,\mathbb C)}
\newcommand{\spfour}{\mathfrak{sp}(4,\mathbb C)}
\newcommand{\spsix}{\mathfrak{sp}(6,\mathbb C)}
\newcommand{\speight}{\mathfrak{sp}(8,\mathbb C)}
\newcommand{\ltc}[2]{\mathcal{L}(#1,#2)}
\newcommand{\ltct}[1]{\mathcal{L}_{#1}}
\newcommand{\hgrp}{G_h(\nub, \E)}
\newcommand{\mgrp}{G_m(\nub, \E)}
\newcommand{\sgrp}{G_s(\nub, \E)}
\newcommand{\lkpa}{\lambda_\kappa}
\newcommand{\lkpal}{\lambda_{\kappa'}}
\newcommand{\klam}{\kappa_\lambda}
\newcommand{\muzn}{[0,\mu)^n \cap \mathbb{Z}^n}
\newcommand{\dist}[1]{\delta_{#1}(\lkpa,\lkpal)}
\newcommand{\rwl}{{\em rwl} }
\begin{document}

\markboth{A. F. Ramos \& Y. Deng}
{Symmetry-guided design of topologies for supercomputer networks}


\title{Symmetry-guided design of topologies for supercomputer networks}

\author{Alexandre F. Ramos$^{1,2,3,4,5,*}$ and Yuefan Deng$^{5,6,7,**}$ \\
  1. Escola de Artes, Ci\^encias e Humanidades; \\
  2. N\'ucleo de Interdisciplinar em Sistemas Complexos; \\
  3. Dept. de Radiologia e Oncologia -- Fac. de Medicina; \\
  4. Instituto do C\^ancer do Estado de S\~ao Paulo, \\
  Universidade de S\~ao Paulo, S\~ao Paulo, SP, 03828-000, Brazil. \\
  5. Shandong Computer Science Center (National Supercomputer \\
  Centre in Jinan), Jinan, Shandong 250101, P.R. China. \\ 
  6. Department of Applied Mathematics and Statistics, Stony Brook \\
  University, Stony Brook, NY 11794, USA. \\
  7. Department of Data and Computer Science, Sun Yat-sen University, \\
  Guangdong 510006, P.R. China. \\
  $^*$alex.ramos@usp.br, $^{**}$ Yuefan.Deng@stonybrook.edu
}




\maketitle

\begin{abstract}
  
A family of graphs optimized as the topologies for supercomputer interconnection networks is proposed. The special needs of such network topologies, minimal diameter and mean path length, are met by special constructions of the weight vectors in a representation of the symplectic algebra. Such theoretical design of topologies can conveniently reconstruct the mesh and hypercubic graphs, widely used as today's network topologies. Our symplectic algebraic approach helps generate many classes of graphs suitable for network topologies.

MSC-class: 94C15, 68R10, 68M10, 22D10

ACM-class: C.2.1; G.2.0; G.2.1; G.2.2

\end{abstract}





\newpage

\section{Introduction}

Despite the processing speed of the fastest supercomputers are at the hundreds of petaflops levels \cite{top500}, applications keep demanding faster computations and put the exascale processing speeds as an essential goal for supercomputing \cite{Kogge11,Pawlowski10}. To reach such a goal we must design interconnect network topologies that enhance communication on parallel computers, among others \cite{Garzon2012}. Therefore, developing systematic tools for optimally interconnect the ever increasing number of communicating components of supercomputers is an important challenge for the HPC community \cite{Duato2002,Dally2003}. A requirement for those tools is to diminish the distance between software and hardware through a unified framework that would enable the development of architecture-aware programming tools for more efficient communication or task mapping \cite{Dongarra11a,Zhang14}.

In this manuscript we extend the symmetries approach for the design of supercomputer network topologies by deriving a general method to construct different network topologies from the Cartan classification of the Lie algebras \cite{Deng12}. The symmetries approach for the construction of interconnect topologies is based on the correspondence between a supercomputer, a graph, and the roots and weights lattice of an irreducible representation of a Lie algebra as resumed on the Table \ref{corrsp}. To construct interconnect topologies from the Lie symmetries we start considering the matrix representations of the Lie algebras on the Cartan basis. Then we linearly combine the ladder operators of the algebra so that the resulting matrix shall coincide with an adjacency matrix of a graph with certain properties. The labeling of the vertices of the graph is constructed in terms of the weight vectors of an irreducible representation of the algebra. As an example we apply our framework for the description of the mesh and hypercubic interconnect topologies in terms of the well known $\mathfrak{su}(2)$ algebra \cite{Duato2002,Efe1991,Harary1988}. It turns out that in this approach those two interconnect topologies are represented in the same framework. Then we introduce a new network topology based on the symplectic algebras that we call symplectic topologies. The graph properties of the symplectic topologies are evaluated in comparison with the hypercube and the mesh. We show that the symplectic algebras enable the construction of graphs characerized by shorter path length distances. As an example we construct a graph with symplectic topology composed by 4,882,813 vertices having an average path length of 8. As a comparison, we construct a hypercubic graph with 1,048,576 vertices and average path length given by 12. Therefore, the symplectic algebras may generate graphs whose interconnect topology is characterized by a high number of vertices having greater connectivity than usual topologies such as the hypercube.

\begin{table}[!h]
  \centering
  \begin{tabular}{c|c|c}
    \hline
    Computer Science & Graph theory & Group theory \\
    \hline
    \hline
    Supercomputer & Graph & Roots and Weights lattices \\
    Computing nodes; switch; router & Graph vertices & Weight vectors \\
    Network topology & Graph edges & Root vectors \\
    \hline
    \hline
  \end{tabular}
  \caption{This table gives a concise presentation of the use of group theoretical methods to design of supercomputer interconnect topology. We introduce the correspondence between terminology and concepts from the three different fields involved in our approach: computer science; graph theory; group theory. The generators of a Lie algebra can be represented in terms of vectors called roots. The generators of a Lie algebra act on a vector space spanned by vectors denoted as weights. The vector addition of two weights can be expressed as a linear combination of root vectors. Hence, the roots of an algebra are connecting the weights and correspond to the edges of a graph while the weights are corresponding to the vertices.}
\label{corrsp}
\end{table}

The remainder of this manuscript is divided as follows: we start with a brief review on the roots and weights lattices for the symplectic algebra. The next section is devoted to the presentation of our results on the construction of the symplectic graph from a roots and weights lattice; that is followed by a comparative analysis of network properties of the mesh, hypercube and symplectic graphs. Our conclusions are presented on the last section. 


\section{Roots and weights lattices of the symplectic algebra}

The symplectic graph is constructed in terms of the roots and weights lattices associated to the irreducible representations (irreps) of the symplectic algebra. The $\spn$ algebra has {\em rank} $n$ and a total of $n(2n+1)$ generators. The symplectic algebras are naturally written in the Cartan-Weyl basis, having a $n$ dimensional center (Cartan's subalgebra) and $2n^2$ {\em root generators} that are also called {\em ladder operators}.

The set $R$ of root vectors (here on referred to as roots) of the $\spn$ are $n$-dimensional vectors consisting of two sub-sets of roots, the long roots $R_l$ and the short roots $R_s$. Given a $\spn$ algebra, the set $R_l$ is composed by vectors of the form $(0,\dots,\pm 2,\dots,0)$ while vectors of the form $(0,\dots,\pm 1, \dots, \pm 1, \dots, 0)$ compose $R_s$. The cardinality of $R_l$ and $R_s$ is, respectively, $2n$ and $2n(n-1)$. A lexicographic evaluation of the roots of the $\spn$ permits classifying its roots as {\em negative roots} (or {\em positive roots}) according to the first non-null coordinate being negative (or positive).

\begin{figure}[!h]
\centering
\includegraphics[width=0.4\linewidth]{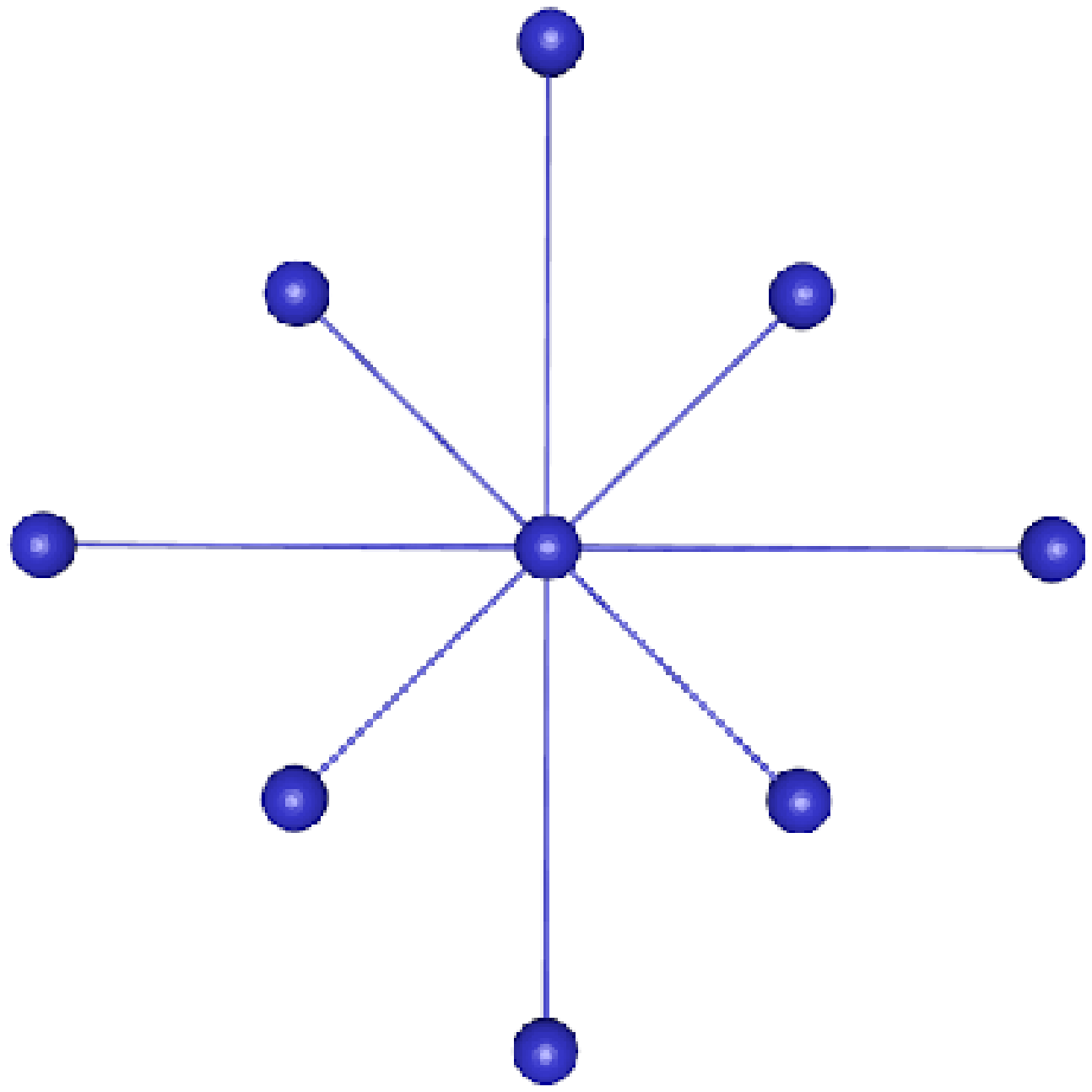}
\includegraphics[width=0.4\linewidth]{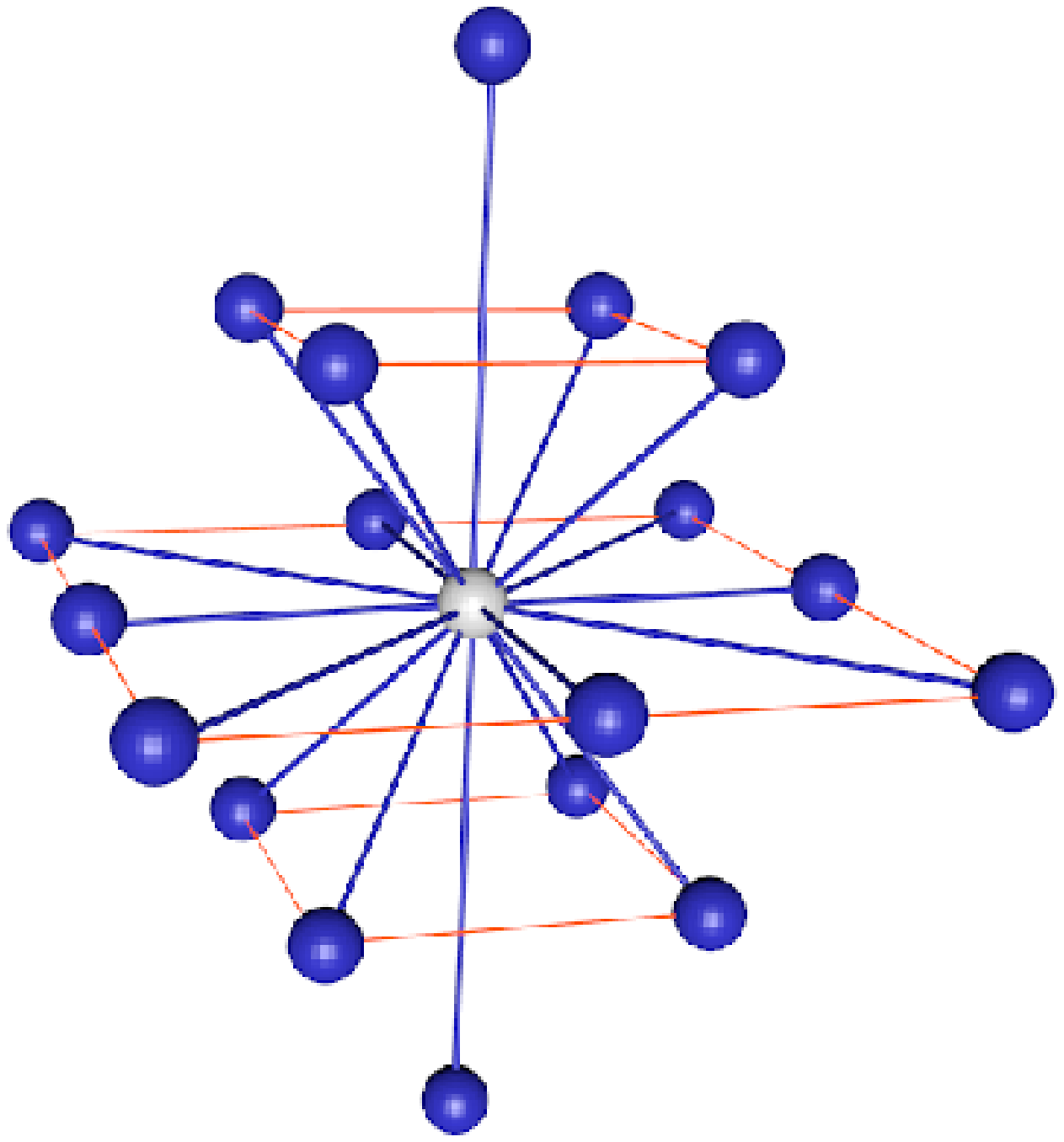}
\caption{The root diagram for the $\spfour$ and $\spsix$ algebras.}
\label{roots}
\end{figure}

The root diagrams of the $\spfour$ and of the $\spsix$ Lie algebras is indicated at the Figure \ref{roots}. Note that we may write the roots in terms of the unit vectors $e_i$ spanning the $n$-dimensional space of the root diagram. Namely,
\begin{equation}\label{roots1}
\pm 2 e_i \ \ {\rm and} \ \ \pm e_i \pm e_j, \ \ i,j = 1,\dots,n,
\end{equation}
respectively, indicate the long and the short roots of the symplectic algebra.

For a given irreducible representation (irrep) of $\spn$, a weight vector (here on referred to as {\em weight}) is a vector of the carrier space of the algebra that is an eigenvector of the Cartan subalgebra. A weight has coordinates indicated by an $n$-tuple $(m_1, \dots, m_n)$ and the highest weight of an irrep is labeled by $(M_1, \dots, M_n)$. The highest weight imposes a constraint on $m_i$ such that 
\begin{equation} \label{mvalues}
-M_i \le m_i \le M_i,
\end{equation}
with a relationship among them and the dimension of the roots and weights lattice of the $\spn$ given in terms of its highest weight by means of the Weyl formula \cite{Wybourne1974}.
 
\begin{figure}[!h]
  \centering
  \begin{minipage}[b]{0.42\linewidth}
    \includegraphics[width=\linewidth]{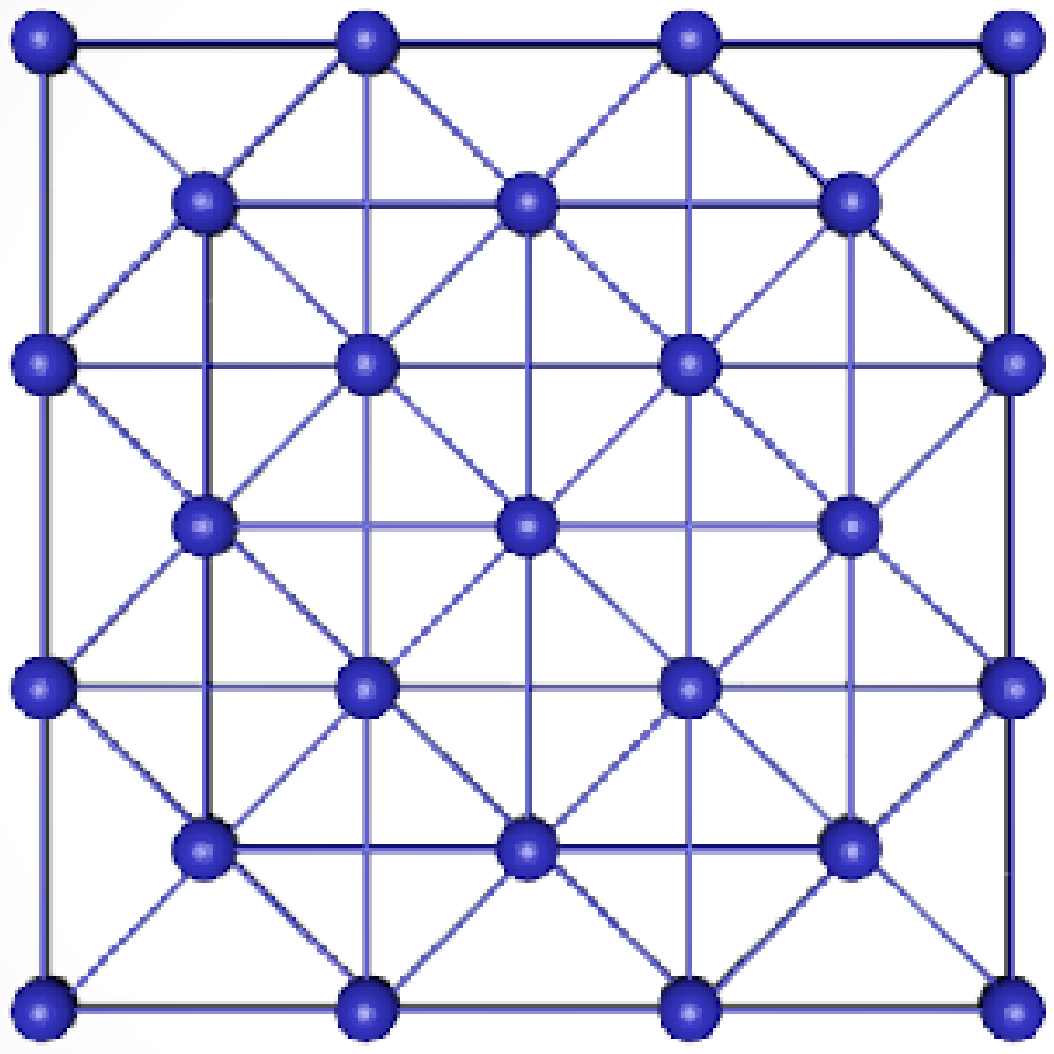}
    \end{minipage}
  \hfill
  \begin{minipage}[b]{0.42\linewidth}
    \includegraphics[width=\linewidth]{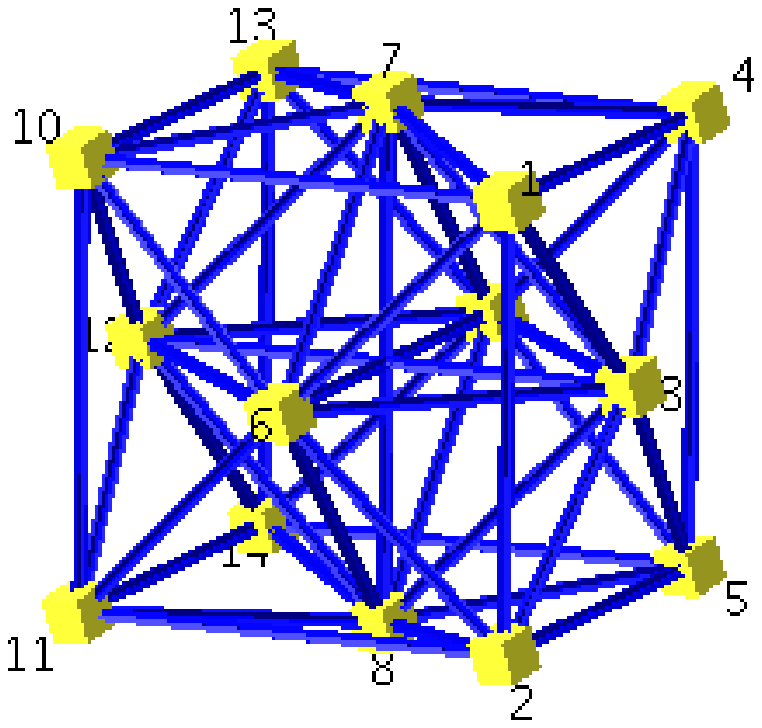}
  \end{minipage}
  \caption{The roots and weights lattices for the irrep $(3,3)$ and $(1,1,1)$, respectively, of the $\spfour$ and $\spsix$ is shown on the left and center frames. 
The lattices are both characterized by the existence of two types of mesh structures, the external and internal ones. The interconnects within a given mesh are given in terms of the long roots while the short roots are interconnecting the internal and external meshes.}
\label{smpltc}
\end{figure}

Interestingly, the fundamental representation of the $\spn$ will have the same number of nodes as the hypercubic topology. However, it shall generate a complete graph that is not of interest for our analysis. Here we focus on the graphs resulting from irreps whose highest weight have the form $(M,\dots,M)$, also known as the anti-symmetric irreps of the symplectic algebras. In that case, the symplectic graph has a mesh-like structure composed by multiple meshes. The external mesh shell encompasses a rich and intricate internal structure, composed by multiple mesh-like sub-graphs. The interconnect between weights of a given sub-mesh can be obtained in terms of the long roots of the $\spn$. Analogously, the interconnect between the weights of two different sub-meshes are given by means of the short roots of the $\spn$. The Figure \ref{smpltc} shows a sample of roots and weights lattices for anti-symmetric irreps of the $\spfour$ and $\spsix$ algebras. Due to the high vertex degree characterizing the roots and weights lattices of the irreps of the $\spsix$ we decided to only present a unit cell (center frame). 
The Figure \ref{graph} shows a graph representation for the roots and weights lattices of the irreps of the $\spfour$ and $\spsix$ algebras composed by 41 and 63 vertices.

\begin{figure}[!h]
  \centering
  \begin{minipage}[b]{0.45\linewidth}
    \includegraphics[width=\linewidth]{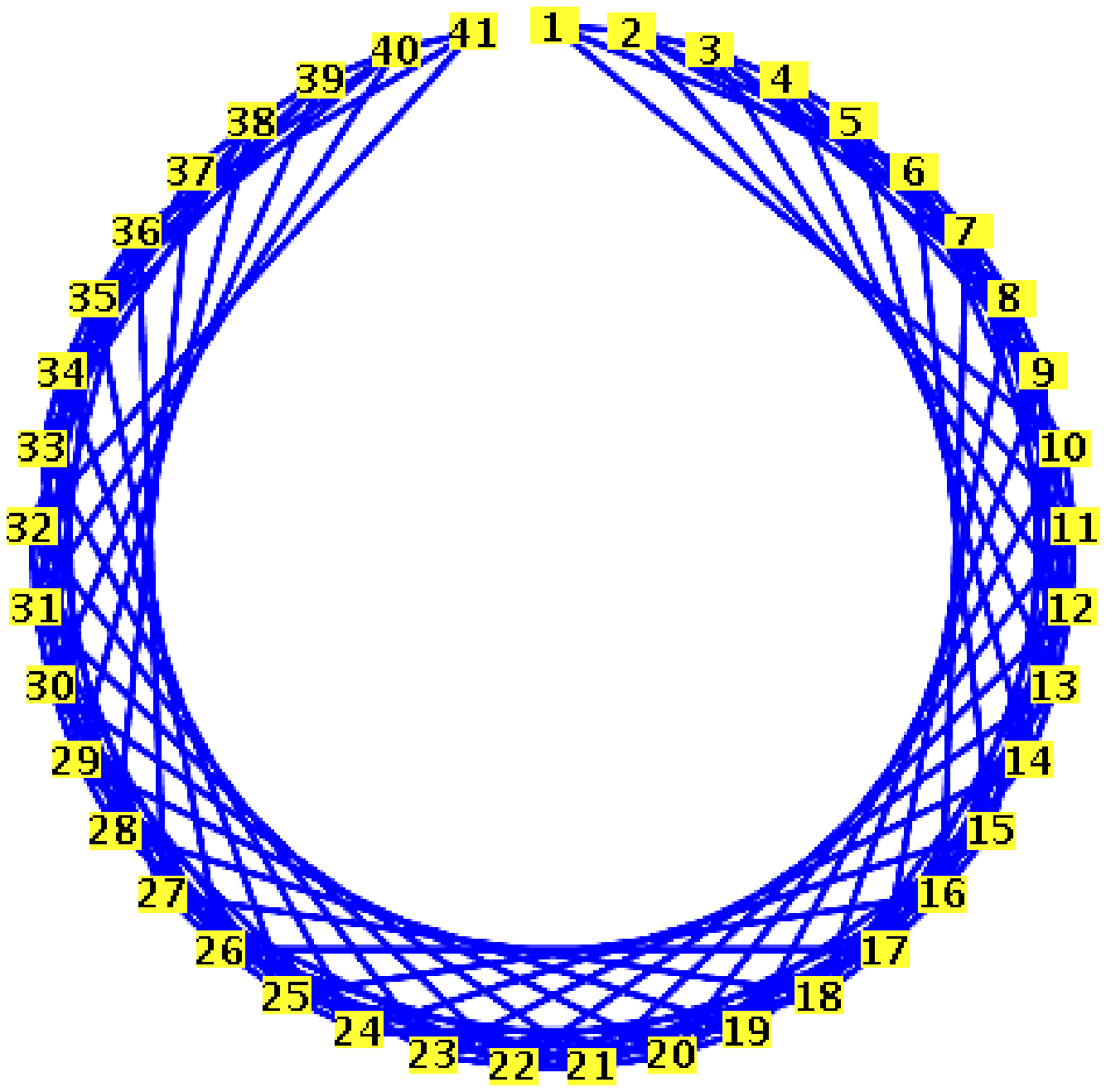}
    \end{minipage}
  \hfill
  \begin{minipage}[b]{0.45\linewidth}
    \includegraphics[width=\linewidth]{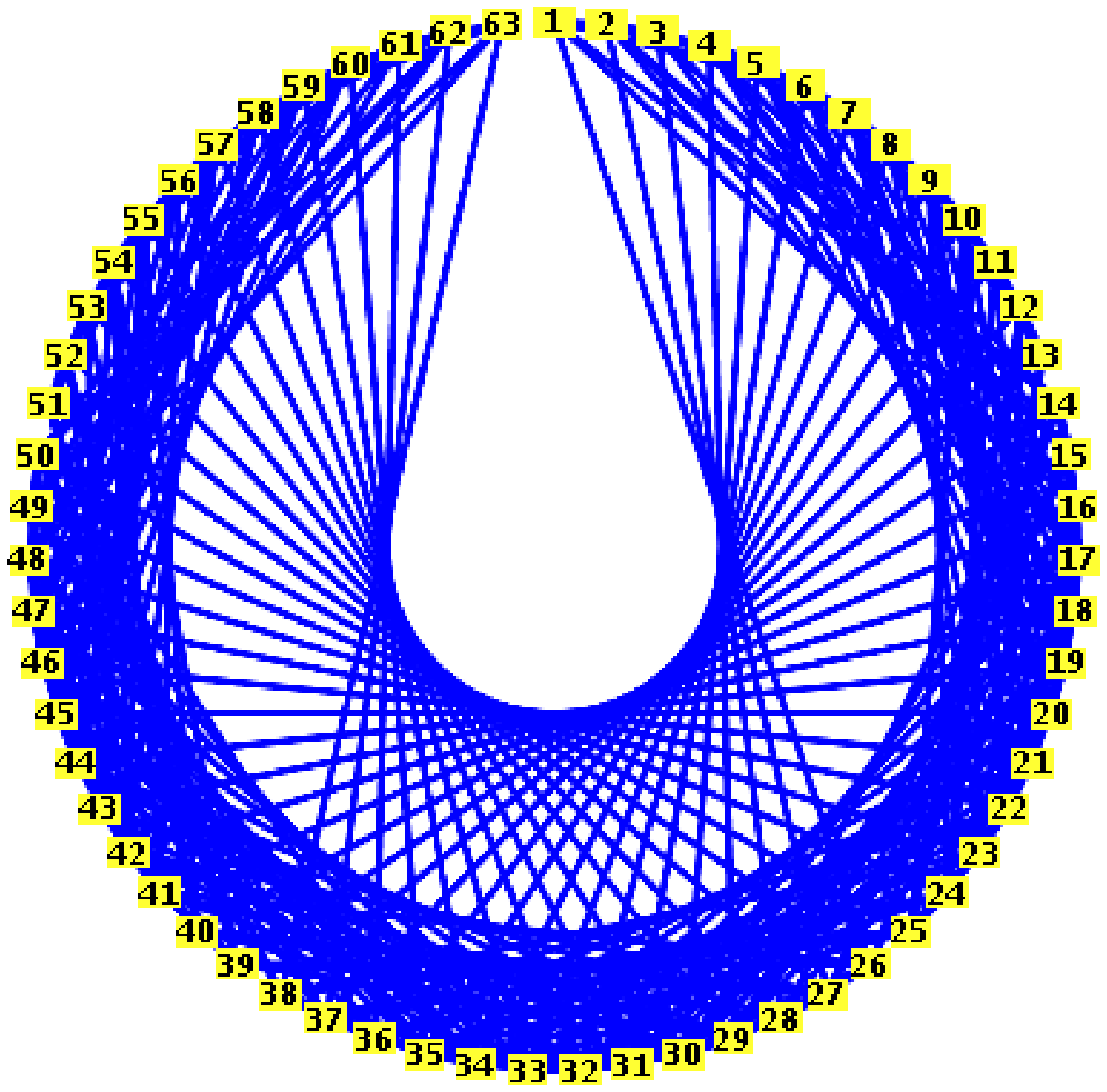}
  \end{minipage}
  \caption{The graph representations obtained from the roots and weights lattices for the irreps of $\spfour$ and $\spsix$, respectively, labeled by $(4,4)$ and $(3,3,3)$. The edges interconnecting the vertices generate a drop shaped internal structure that becomes thinner as the rank of the algebra grows. Furthermore, the $\spsix$ graph has a first layer that is denser and similar to the $\spfour$ graph. That interconnect pattern is due to the $\spfour$ being a subalgebra of the $\spsix$.}
\label{graph}
\end{figure}

\section{Construction of symplectic graphs for network topologies}

Here on we shall refer to the roots and weights lattices of an irrep of a Lie algebra simply as {\em rwl}. A graph shall be denoted by $\grph$ with $\nub$ and $\E$ indicating, respectively, the set of vertices of the graph and its adjacency matrix.

\subsection{Graph's vertices and lattices' nodes correspondence}

The use of lattices for the representation of a graph is helpful for the computation of its adjacency and distance matrices. We start establishing a correspondence rule between the nodes of the lattice and the vertices of the graph. Hence, we shall consider graphs that can be represented as lattices -- $\ltc{\mu}{n}$ -- on a $n$-dimensional space with $\ltc{\mu}{n} = \muzn$ and $\mu$ a given positive integer. A node of $\ltc{\mu}{n}$ is addressed by a $n$-tuple, $\lkpa$, with $\lkpa=(l_1,\dots,l_n)$ and $l_i=0,1, \dots, \mu-1$ for all $i$. The index $\kappa$ is an integer aimed to state the correspondence between a node of the lattice and a vertex of the graph, with $\kappa \in \grph$ obeying
\begin{equation} \label{labeling}
\kappa_\lambda = \sum_{i=1}^n {l_i \, \mu^{i-1}}.
\end{equation}
for a given node $\lkpa$ of $\ltc{\mu}{n}$.

The inverse transformation for the labeling rule of the Eq. (\ref{labeling}) is written in terms of a recursion relation. For a given integer $\klam$ labeling a vertex of a graph, one obtains the address of its corresponding node on a $n$-dimensional lattice as follows:
\begin{equation} \label{inverse-label}
r_n = \klam, \ \ l_i = \lfloor r_i/\mu^{i-1} \rfloor, \ \ r_{i-1} = r_i - l_i\,\mu^{i-1}, \ \ i=n, \dots, 2, \ \ l_1=r_1,
\end{equation}
with $\lfloor . \rfloor$ indicating the floor function.

Here we are going to work with the following lattices on a $n$-dimensional Cartesian space: mesh -- $\ltct{m}\equiv\ltc{\mu}{n}$; hypercubic -- $\ltct{h}\equiv \ltc{2}{n}$; symplectic -- $\ltct{s}$.

\subsection{The hypercubic, mesh and symplectic lattices.}

The $n$-dimensional mesh lattice is defined as $\ltct{m} = \muzn$ and a node of this network is addressed by a $n$-tuple, $\lkpa$, with $\lkpa=(l_1,\dots,l_n)$ and coordinates $l_i=0,\dots,\mu-1$.

The definition of the $n$-dimensional hypercubic lattice is obtained by imposing $\mu=2$ onto the definition of the mesh lattices. Hence, $\ltct{h} = [0,2)^n \cap \mathbb{Z}^n$ and the address of a node is indicated by a $n$-tuple, $\lkpa$, with $\lkpa=(l_1,\dots,l_n)$ and coordinates $l_i=0,1$.

The symplectic lattice, $\ltct{s}$, is defined in terms of its \rwl for an anti-symmetric irrep of the $\spn$ labeled by $(M,\dots,M)$. A weight has coordinates denoted by $w_p$, with $w_p=(m_1,\dots,m_n)$ and the index $p$ defining a label for the weight such that
$$
  p\in\biggl\{ 0,\dots,\frac{(2M+1)^n-1}{2} \biggr\}.
$$
The symplectic lattice can be mapped as a sub-lattice of $\ltc{\mu}{n}$ whose nodes are addressed by $\lkpa=(l_1,\dots,l_n)$ if we impose
\begin{equation}\label{mumrel}
  \mu=2M+1.
\end{equation}
and redefine $m_i$ on the Eq. (\ref{mvalues}) such that
\begin{equation}\label{mMrel}
  m_i \rightarrow m_i+M.
\end{equation}
A weight $w_p$ shall correspond to a node $\lkpa$ if $\kappa=2p$, for all $p\in\{0,\dots,\left[(2M+1)^n-1\right]/2 \}$. Therefore, the nodes of the symplectic lattice correspond to the even labeled nodes of a lattice $\ltc{\mu}{n}$. That occurs due to the adjacency rule between the weights of an irrep of the $\spn$ as we shall demonstrate on the Lemma \ref{l:evendiff}.

\subsubsection{The node adjacency in a symplectic lattice.}

\begin{lemma}\label{l:evendiff}
Let us consider a lattice $\ltc{\mu}{n}$ with nodes addressed by $\lkpa$, with $\mu=2M+1$, and the set of negative root vectors of the $\spn$ algebra. The vector addition of a $n$-tuple $\lkpa$ to a negative root vector of $\spn$ has two possible results: {\em i)} another $n$-tuple labeled by $\lkpal \in \ltc{\mu}{n}$; {\em ii)} a $n$-tuple not belonging to $\ltc{\mu}{n}$. In case of {\em i)} the difference $\Delta \kappa = \kappa - \kappa'$ is an even number. 
\end{lemma}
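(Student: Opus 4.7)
My plan is to reduce $\Delta\kappa$ modulo $2$ using the labeling formula and exploit that $\mu=2M+1$ is odd, so that the argument collapses to a parity count on the coordinates of the root vector itself.

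First, I would apply Eq. (\ref{labeling}) to both endpoints: since $\lkpa,\lkpal \in \ltc{\mu}{n}$ by hypothesis (we are in case {\em i}), we have
\begin{equation*}
  \Delta\kappa \;=\; \kappa-\kappa' \;=\; \sum_{i=1}^n (l_i - l_i')\,\mu^{i-1}.
\end{equation*}
Because $\mu=2M+1$ is odd, every power $\mu^{i-1}$ is odd, so $\mu^{i-1}\equiv 1 \pmod 2$ for all $i\ge 1$. Hence
\begin{equation*}
  \Delta\kappa \;\equiv\; \sum_{i=1}^n (l_i - l_i') \pmod 2,
\end{equation*}
and it remains to check that the sum of coordinate differences is even.

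Second, I would use the assumption that $\lkpal - \lkpa$ is a negative root of $\spn$. By Eq. (\ref{roots1}), any such root is either a long root of the form $-2e_k$ or a short root of the form $\pm e_i \pm e_j$ with $i\neq j$ (the signs being constrained so that the first non-zero entry is negative, but this plays no role for what follows). In every case, the sum of the coordinates of the root equals $\pm 2$ or $0$, i.e.\ is an even integer. Thus $\sum_i(l_i'-l_i)$, and therefore $\sum_i(l_i-l_i')$, is even. Combining with the previous paragraph yields $\Delta\kappa\equiv 0\pmod 2$.

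There is no real obstacle: the entire proof is a one-line parity calculation once one observes that the oddness of $\mu$ lets us replace each $\mu^{i-1}$ by $1$ modulo $2$. The only bookkeeping is to verify that the classification of negative roots given in Eq. (\ref{roots1}) exhausts the cases, and to note that case {\em ii)} requires nothing to be shown since the lemma makes a claim only in case {\em i)}.
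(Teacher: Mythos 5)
Your proof is correct and rests on the same observation as the paper's: applying Eq.~(\ref{labeling}) to the difference and using that $\mu=2M+1$ makes every power $\mu^{i-1}$ odd. The paper simply splits into the two root types (long roots giving $\Delta\kappa=2\mu^{i-1}$, short roots giving $\Delta\kappa=\mu^{i-1}\mp\mu^{j-1}$) and checks evenness in each case, whereas you unify the cases by reducing modulo $2$ and noting that every root of $\spn$ has even coordinate sum; this is a cosmetic streamlining of the same argument.
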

\begin{proof}[Proof of the Lemma \ref{l:evendiff}.] We are only interested on the condition {\em i)}. We start with a node $\lkpa = (l_1,\dots,l_n)$ and consider its vector addition with a long negative root vector of the form $-2e_i$, namely, $\lkpal = \lkpa+(-2e_i) = (l_1,\dots,l_i-2,\dots,l_n)$. Accordingly with Eq. (\ref{labeling}) the difference 
$$
\Delta \kappa = \kappa - \kappa' = 2 \mu^{i-1},
$$
is an even number. Next we consider the vector addition of the weight and a short negative root vector of the form $-e_i \pm e_j$. That results into $\lkpal = \lkpa+(-e_i \pm e_j) = (l_1,\dots,l_i-1,\dots,l_j\pm1,\dots, l_n)$, such that,
$$
\Delta \kappa = \kappa - \kappa' = \mu^{i-1} \mp \mu^{j-1},
$$
is an even number since $\mu^{i-1}$ and $\mu^{j-1}$ are both odd numbers, see Eq. (\ref{mumrel}).
\end{proof}

\begin{theorem}\label{t:smpltc}
Let us consider a lattice $\ltc{\mu}{n}$, with nodes addressed by $\lkpa$ as in Eq. (\ref{labeling}), and an irrep of $\spn$ given by the maximum weight $(M,\dots,M)$. $\ltct{s}$ is a sub-lattice of $\ltc{\mu}{n}$ if one sets $\mu=2M+1$, and maps each weight vector $w_p$ of the symplectic lattice onto a node of the mesh lattice addressed by $\lkpa$ such that $\kappa = 2p$, with $p=0,1,\dots,(\mu^n-1)/2$. 
\end{theorem}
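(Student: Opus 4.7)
The plan is to combine Lemma \ref{l:evendiff} with the standard Lie-theoretic fact that every weight of a finite-dimensional irrep is obtained from the highest weight by iterated application of negative root ladder operators. The argument will split into three short checks: that the shift \refeq{mMrel} places each weight inside $\ltc{\mu}{n}$, that every resulting label is even, and that the induced map is bijective onto the even-indexed nodes.

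First I would check the shift. From \refeq{mvalues} and \refeq{mMrel}, the bound $-M \le m_i \le M$ becomes $0 \le l_i \le 2M = \mu - 1$, so each weight lives at a genuine node $\lkpa \in \ltc{\mu}{n}$. The highest weight $(M,\dots,M)$ shifts to $(\mu-1,\dots,\mu-1)$, whose label from \refeq{labeling} is
\begin{equation*}
\sum_{i=1}^n (\mu-1)\mu^{i-1} = \mu^n - 1,
\end{equation*}
an even integer, since $\mu = 2M+1$ is odd. This anchors the parity induction.

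Next I would carry out the parity induction. Any weight of the $(M,\dots,M)$ irrep is reached from the highest weight by a finite chain of negative root additions, each step keeping the shifted coordinates inside $[0,\mu-1]$ and therefore falling in case \emph{(i)} of Lemma \ref{l:evendiff}. By the lemma, each step changes $\kappa$ by an even integer, so iterating from the anchor $\kappa = \mu^n - 1$ every weight receives an even label. Writing $\kappa = 2p$ then defines a nonnegative integer $p$ for each weight, consistent with the range $p \in \{0,1,\dots,(\mu^n-1)/2\}$ stated in the theorem.

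Finally I would close by counting. The even labels in $\{0, 1, \dots, \mu^n - 1\}$ number exactly $(\mu^n + 1)/2$, matching the range of $p$; distinct weights have distinct shifted coordinates and hence distinct labels under \refeq{labeling}, so $w_p \mapsto \lkpa$ is a bijection onto the even-indexed nodes of $\ltc{\mu}{n}$, identifying $\ltct{s}$ as the claimed sub-lattice. The part I expect to need the most care is the appeal to weight-diagram connectivity under negative roots together with the requirement that each intermediate step actually corresponds to case \emph{(i)} of Lemma \ref{l:evendiff}; the authors' restriction to the anti-symmetric highest weight $(M,\dots,M)$ is precisely what keeps negative-root hops inside the weight diagram and makes the induction go through.
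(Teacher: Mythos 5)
Your proposal follows essentially the same route as the paper's proof: anchor the map at the highest weight $(M,\dots,M)\mapsto(2M,\dots,2M)$ with even label $\mu^n-1$, then propagate evenness to every other weight via repeated addition of negative roots using Lemma \ref{l:evendiff}. The coordinate-range check and the final counting/bijectivity argument you add are sound refinements that the paper leaves implicit, but they do not change the underlying strategy.
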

\begin{proof}[Proof of the theorem \ref{t:smpltc}.] We start noticing that $\mu^n - 1 = (2M+1)^n - 1$ is an even number and that an irrep given by the highest weight $(M,\dots,M)$ has an associated lowest weight $(-M,\dots,-M)$. The mapping of the weight vectors onto the nodes of $\ltc{\mu}{n}$ is done by defining the coordinates of a node $\lkpa$ as $l_i=m_i+M$. Hence, the highest weight vector of the irrep is mapped onto $\lambda_{\mu^n-1}=(2M,\dots,2M)$ and $\lambda_0=(0,\dots,0)$ corresponds to the lowest weight vector. The remaining weight vectors correspondence is obtained by the repeated addition of the negative root vectors of $\spn$ to the node $\lambda_{\mu^n-1}$. The resulting nodes are addressed by $\lkpal$ and the difference $\mu^n-1 - \kappa'$ is an even number, as stated on Lemma \ref{l:evendiff}.
\end{proof}

A corollary of the Theorem \ref{t:smpltc} is that the nodes of $\ltct{s}$ mapped onto $\ltc{\mu}{n}$ shall be addressed by two types of $n$-tuples. We designate as {\em bosonic} the nodes whose $\lkpa$ are of the form $(2q_1,\dots,2q_n)$ and as {\em fermionic} the nodes that have coordinates of the form $2q_i+1$ and of the form $2q_j$. A fermionic node has an even number of coordinates of the form $2q_i+1$. Note that $q_i \in \{0, \dots, M\}$ for the coordinates of the form $2q_i$ and that for coordinates of the form $2q_i+1$ we have $q_i \in \{0, \dots, M-1\}$.

\subsection{Adjacency properties on lattices and graphs}

The linking between two first neighbor nodes of a lattice can be expressed in terms of the vector connecting them. The set of all vectors connecting the first neighbor nodes of a node of the lattice is the set of all edges of the corresponding graph. Hence, the correspondence between the lattice's vectors and the graph's edges can be described in terms of the unit vectors $e_1,\dots,e_n$ spanning the $n$-dimensional space where the lattice is defined. In the {\em hypercubic} and the {\em mesh} topologies the vectors linking a node to its neighbor is given in terms of the unit vectors $\pm e_i$. The {\em symplectic} topology has its set of edges connecting a node to its first neighbors given by the root vectors, $\pm 2e_i$ and $\pm e_i \pm e_j$, as defined at the Eq. (\ref{roots1}).

The final step is to determine the adjacency matrix of the graph in terms of the edges of a lattice. The set of all first neighbors of a node $\lkpa$ of a lattice is determined by summing $\lkpa$ to the vectors indicating the edges of the topology.

In a mesh topology, the nodes $\lkpal$ that are first neighbors of a node $\lkpa$ satisfy:
\begin{equation}\label{msh-nbr}
\kappa' = \kappa \pm \mu^{i-1},  \ \ i=1,\dots,n, \ \ {\rm such \ \ that} \ \ 0\le \kappa' \le \mu^n-1,
\end{equation}
ensures $\lkpal\in\ltct{m}$. That is obtained with the Eq. (\ref{labeling}).

Similarly, in a hypercubic topology the node $\lkpa$ has as first neighbors the nodes $\lkpal$ which labels obey:
\begin{equation}\label{hpc-nbr}
\kappa' = \kappa \pm 2^{i-1}, \ \ i=1,\dots,n, \ \ {\rm such \ \ that} \ \ 0\le \kappa' \le 2^n-1,
\end{equation}
ensures $\lkpal\in\ltct{h}$ and again we have used the Eq. (\ref{labeling}).

\begin{figure}[!h]
  \centering
  \begin{minipage}[b]{0.32\linewidth}
    \includegraphics[width=\linewidth]{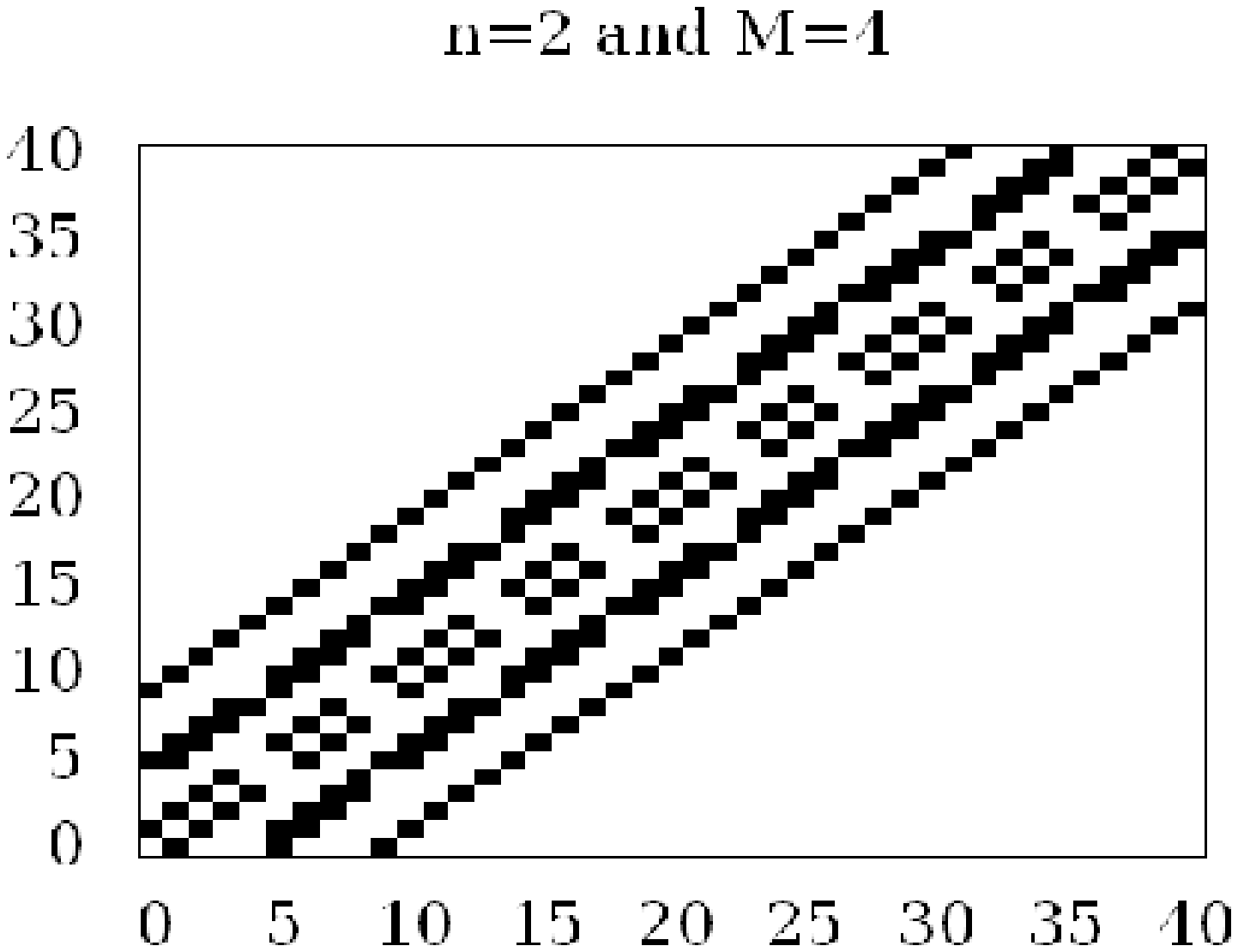}
    \end{minipage}
  \hfill
  \begin{minipage}[b]{0.32\linewidth}
    \includegraphics[width=\linewidth]{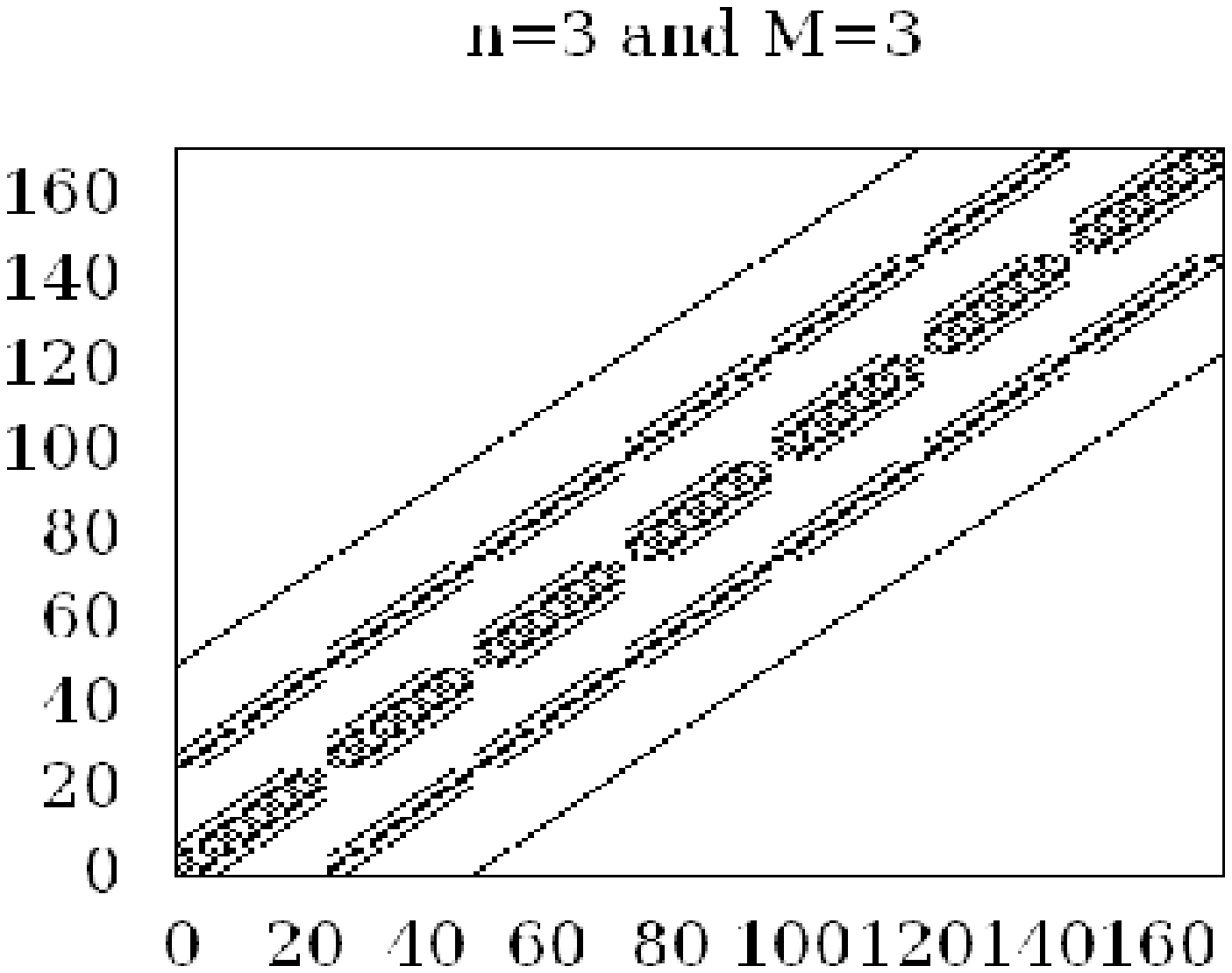}
  \end{minipage}
  \hfill
  \begin{minipage}[b]{0.32\linewidth}
    \includegraphics[width=\linewidth]{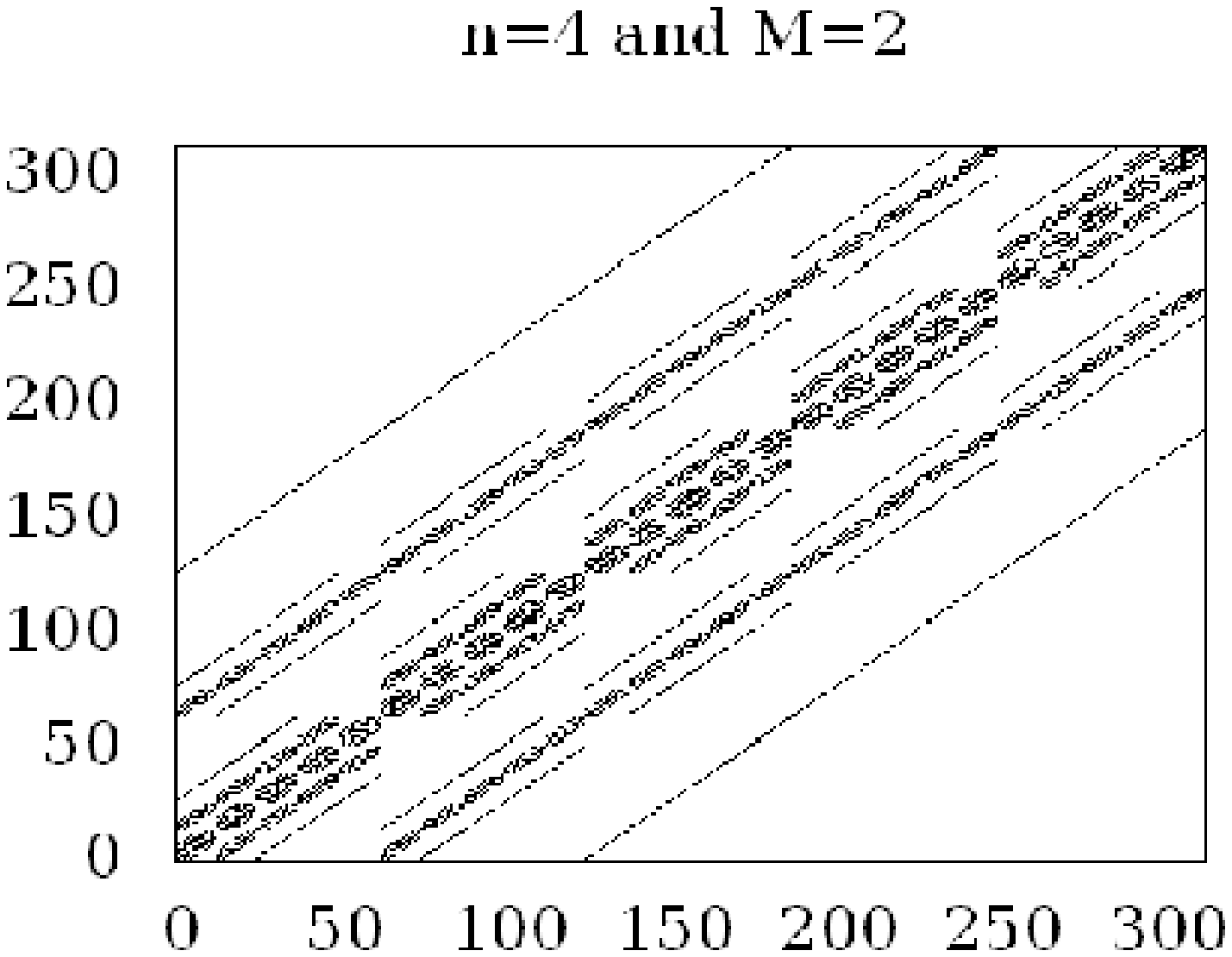}
  \end{minipage}
  \caption{{\bf Adjacency matrix.} The pattern of the adjacency matrix of the symplectic graphs obtained from the roots and weights lattices for the irreps of $\spfour$, $\spsix$, and $\speight$, respectively, labeled by $(4,4)$, $(3,3,3)$, and $(2,2,2,2)$. The \rwl have, respectively, 41, 172, and 313 vertices. The most internal pattern of the non-principal diagonal of the adjacency matrix of a $\spnone$ recovers the patterns of the $\spn$ \rwl adjacency matrix. This is due to $\spnone \supset \spn$ and the remaining elements are indicating the occurrence of new interconnects of the greater algebraic structure.}
\label{adjm}
\end{figure}

In the symplectic topology we consider a node $w_p$ of $\ltct{s}$ associated to the node $\lkpa$ of $\ltc{\mu}{n}$. The first neighbors $w_{p'}$ of $\ltct{s}$ associated to the nodes $\lkpal$ of $\ltc{\mu}{n}$ are such that:
\begin{eqnarray}
\kappa' = 
\left\{
\begin{array}{ll}
\kappa \pm 2(2M+1)^{i-1}, & i=1,\dots,n, \\
\kappa \pm (2M+1)^{i-1} \pm (2M+1)^{j-1}, & 1 \le i < j \le n,
\end{array}
\right. \label{stc-nbr} 
\end{eqnarray}
is obtained with the Eq. (\ref{labeling}). To ensure that $\lkpal \in \ltct{s}$ one imposes that $0\le \kappa' \le (2M+1)^n-1$.

The Eqs. (\ref{msh-nbr}), (\ref{hpc-nbr}), and (\ref{stc-nbr}) determine the entries labeled by $\kappa$ and $\kappa'$ of the adjacency matrix of the mesh, hypercubic and symplectic topologies that shall have values equal to 0 or 1. The Figure \ref{adjm} shows three examples of adjacency matrices for the symplectic graphs. 

\subsection{The node distance in a lattice}

The next two theorems are dedicated to the calculation of the distance between two nodes of a lattice by considering its nodes coordinates. One may use those theorems to construct the distance matrix of a lattice and, by means of the correspondence rule of the Eqs. (\ref{labeling}) and (\ref{inverse-label}), interpret it as a distance matrix of the associated graph. We first show the theorem for the mesh and hypercubic lattices and then for the symplectic topology. 

\begin{theorem}\label{t:dstnc}
Let us consider the mesh and the hypercubic lattices as defined previously and denote the distance between two nodes $\lkpa=(l_1,\dots,l_n)$ and $\lkpal=(l_1',\dots,l_n')$ by, respectively, $\dist{m}$ and $\dist{h}$. The distance between the nodes $\lkpa,\lkpal$ of the mesh and hypercubic lattices is given by:
\begin{equation} \label{distance}
\dist{m,h} = \sum_{i=1}^n | l_i - l_i' |.
\end{equation}
\end{theorem}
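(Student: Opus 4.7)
The plan is to prove the formula by establishing matching upper and lower bounds on the graph distance, exploiting the fact that the edge vectors for both topologies are exactly the signed unit vectors $\pm e_i$.

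First I would set the stage: by the adjacency rules derived from the edge vectors $\pm e_i$, two nodes of $\ltct{m}$ (resp.\ $\ltct{h}$) are joined by an edge if and only if their coordinate $n$-tuples differ in exactly one position, and in that position differ by $1$ in absolute value. Consequently, $\dist{m}$ and $\dist{h}$ coincide with the graph distance on the corresponding lattice, which I will bound from above and below by the $\ell_1$ quantity $S(\lkpa,\lkpal)=\sum_{i=1}^n|l_i-l_i'|$.

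For the upper bound I would exhibit an explicit geodesic. Starting at $\lkpa$, process the coordinates in order $i=1,\dots,n$, and in the $i$-th phase apply $|l_i-l_i'|$ consecutive unit steps of the form $\mathrm{sgn}(l_i'-l_i)\,e_i$, so that the $i$-th coordinate is driven from $l_i$ to $l_i'$ while all other coordinates are left unchanged. After all $n$ phases the path terminates at $\lkpal$ and has total length $S(\lkpa,\lkpal)$. I would then verify admissibility of the path: in the mesh case each intermediate value of the $i$-th coordinate lies between $\min\{l_i,l_i'\}$ and $\max\{l_i,l_i'\}$, hence in $\{0,\dots,\mu-1\}$, so every intermediate node is in $\ltct{m}$; the hypercubic case is immediate since each $|l_i-l_i'|\le 1$ and the phase either does nothing or performs a single legal flip in $\{0,1\}$.

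For the lower bound I would argue that any walk from $\lkpa$ to $\lkpal$ in the lattice has length at least $S(\lkpa,\lkpal)$. The point is that a single edge changes exactly one coordinate by $\pm 1$, so along any walk the quantity $S$ to the target can decrease by at most $1$ per step. Formally, if the walk visits $\lkpa=\boldsymbol\lambda^{(0)},\boldsymbol\lambda^{(1)},\dots,\boldsymbol\lambda^{(k)}=\lkpal$, then the triangle inequality on each coordinate gives $S(\boldsymbol\lambda^{(t-1)},\lkpal)-S(\boldsymbol\lambda^{(t)},\lkpal)\le 1$, and summing over $t=1,\dots,k$ yields $k\ge S(\lkpa,\lkpal)$.

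The two bounds combine to give $\dist{m,h}=S(\lkpa,\lkpal)$, which is the claim. I do not foresee a real obstacle here; the only point worth care is the admissibility of the geodesic in the mesh case, where one must confirm the path never leaves $[0,\mu)^n\cap\mathbb{Z}^n$, and this is immediate because the constructed path is coordinatewise monotone between $l_i$ and $l_i'$.
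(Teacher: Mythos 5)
Your proof is correct and rests on the same underlying idea as the paper's: the graph distance is the minimal number of signed unit vectors $\pm e_i$ needed to carry $\lkpa$ to $\lkpal$, which is the $\ell_1$ quantity $\sum_{i=1}^n|l_i-l_i'|$. The paper's one-line argument merely asserts that this count equals the distance; your version is the complete form of that argument, since you supply the matching lower bound (each edge changes one coordinate by $\pm 1$, so the $\ell_1$ distance to the target drops by at most one per step) and you check the admissibility of the coordinatewise-monotone geodesic, i.e.\ that it never leaves $[0,\mu)^n\cap\mathbb{Z}^n$ --- two points the paper glosses over.
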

\begin{proof}[Proof of the theorem \ref{t:dstnc}] The vector connecting the nodes $\lkpa$ and $\lkpa'$ is $\Delta \lkpa = (\Delta l_1, \dots, \Delta l_n)$, and in terms of the unit vectors $e_i$ one may write $\Delta \lkpa = \Delta l_1 e_1 + \dots +  \Delta l_n e_n $. Since the addresses of the nodes $\lkpa$ and $\lkpa'$ are built only with integers, the elements $\Delta l_i$ are also integers. The distance between the nodes is the number of unit vectors needed to represent the vector $\Delta \lkpa$ which is $\sum_{i=1}^n |\Delta l_i|$.
\end{proof}

\begin{figure}[!h]
  \centering
  \begin{minipage}[b]{0.32\linewidth}
    \includegraphics[width=\linewidth]{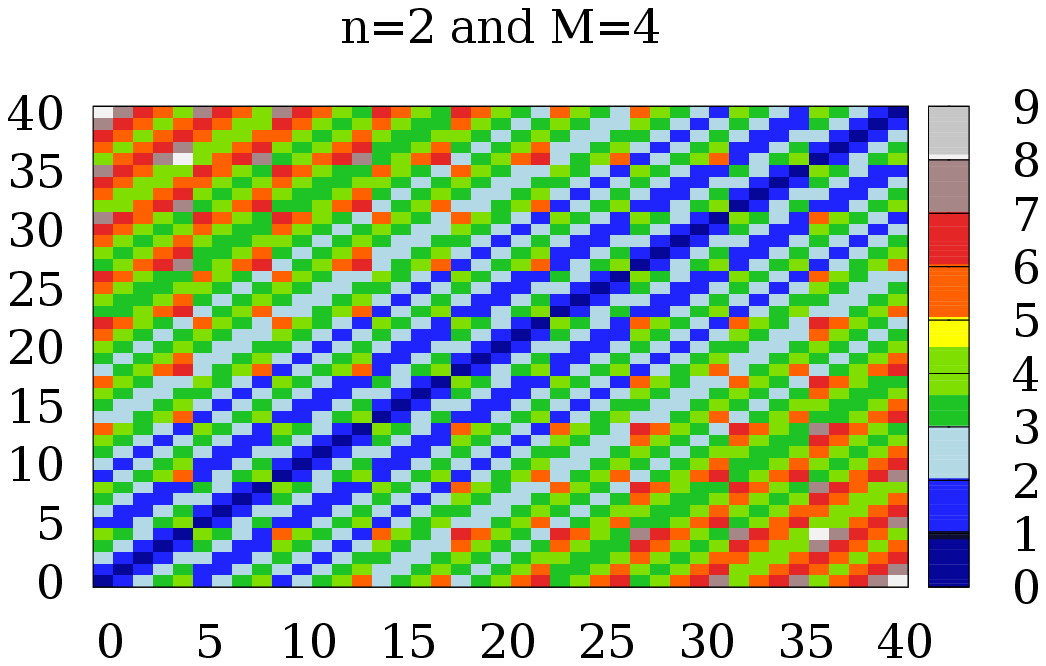}
    \end{minipage}
  \hfill
  \begin{minipage}[b]{0.32\linewidth}
    \includegraphics[width=\linewidth]{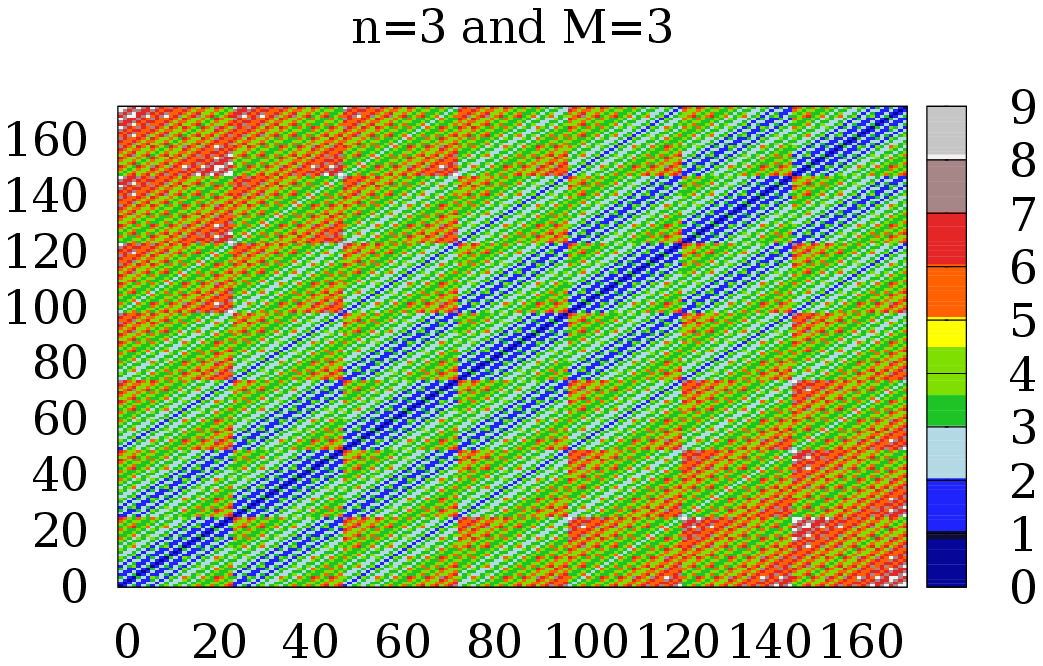}
  \end{minipage}
  \hfill
  \begin{minipage}[b]{0.32\linewidth}
    \includegraphics[width=\linewidth]{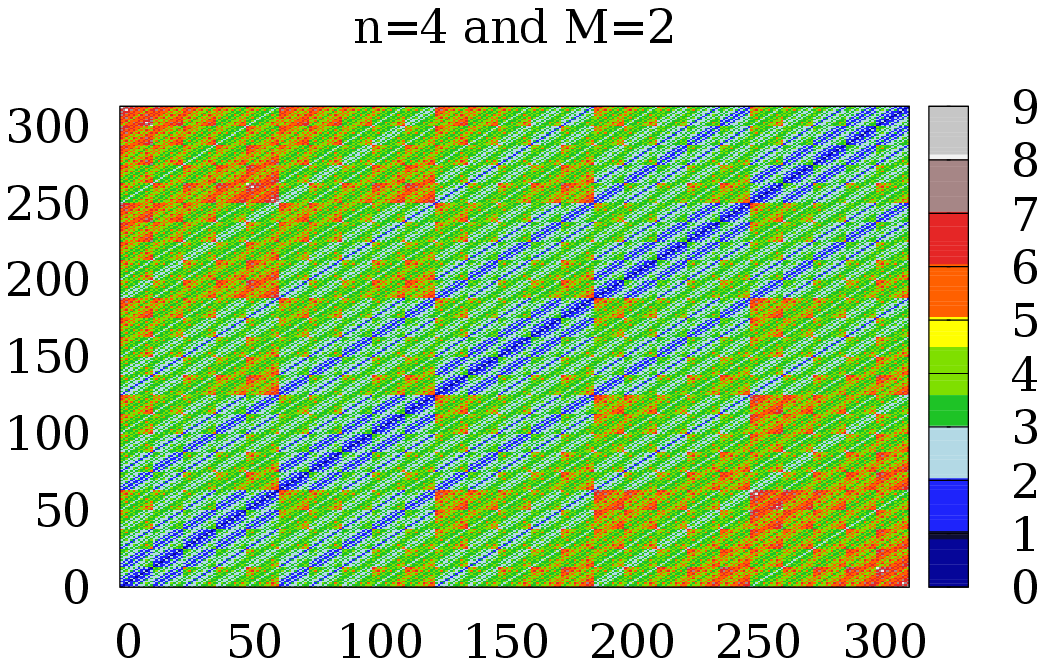}
  \end{minipage}
  \caption{{\bf Distance matrix.} The pattern of the distance matrix of the symplectic graphs obtained from the roots and weights lattices for the irreps of $\spfour$, $\spsix$, and $\speight$, respectively, labeled by $(4,4)$, $(3,3,3)$, and $(2,2,2,2)$.}
\label{dstncm}
\end{figure}

\begin{theorem}\label{t:dstnc1}
Let us consider the symplectic lattice and denote by $\dist{s}$ the distance between two nodes $\lkpa=(l_1,\dots,l_n)$ and $\lkpa'=(l_1',\dots,l_n')$, with $\lkpa,\lkpal \in \ltct{s}$. The distance between the two nodes is:
\begin{equation} \label{distance1}
\dist{s} = \sum_{i=1}^n \frac{| l_i - l_i' |}{2}.
\end{equation}
\end{theorem}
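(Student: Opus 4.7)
The plan is to prove Theorem \ref{t:dstnc1} by sandwiching the symplectic graph distance between matching lower and upper bounds. Write $d_i := l_i - l_i'$ and $D := \sum_{i=1}^n |d_i|$, so that the target equality reads $\dist{s} = D/2$. A preliminary observation is that $D$ is always even when $\lkpa, \lkpal \in \ltct{s}$: since $\mu = 2M+1$ is odd, every $\mu^{i-1}$ is odd, so Eq. (\ref{labeling}) gives $\kappa \equiv \sum_i l_i \pmod 2$. Because both $\kappa$ and $\kappa'$ are even by Theorem \ref{t:smpltc}, $\sum_i d_i$, and hence $D$, must be even.

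For the lower bound, I would note that every edge of the symplectic lattice adds a root vector of $\spn$, either a long root $\pm 2 e_i$ or a short root $\pm e_i \pm e_j$ with $i \neq j$, and each such root has $L^1$ norm exactly $2$. If a path of $k$ edges connects $\lkpa$ to $\lkpal$, the sum of the root vectors along the path equals $\lkpal - \lkpa$, and the triangle inequality in $\ell^1$ yields $D \leq 2k$, so $\dist{s} \geq D/2$.

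For the matching upper bound, I would construct an explicit path of length $D/2$ in two phases. In Phase A, for each coordinate $i$ with $|d_i| \geq 2$, apply $\lfloor |d_i|/2 \rfloor$ successive long-root moves $-\mathrm{sgn}(d_i) \cdot 2 e_i$, driving $|d_i|$ down to $0$ or $1$. In Phase B, the parity observation above guarantees that the indices with residual $|d_i| = 1$ come in an even number, so they can be paired off; each pair $(i_k, j_k)$ is then zeroed by a single short-root move $-\mathrm{sgn}(d_{i_k}) e_{i_k} - \mathrm{sgn}(d_{j_k}) e_{j_k}$. Each coordinate moves monotonically from $l_i$ toward $l_i'$, so every intermediate node has coordinates in $[0,\mu)$, and by Lemma \ref{l:evendiff} the even-$\kappa$ property is preserved at each step, keeping the path inside $\ltct{s}$. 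The total step count is $\sum_i \lfloor |d_i|/2 \rfloor + \tfrac12 \, \#\{i : |d_i| \text{ odd}\} = D/2$, matching the lower bound.

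The main conceptual step, and the one requiring the most care, is the parity argument that makes Phase B close: without knowing that the number of residual $\pm 1$ coordinates is even, the construction would be stuck with a leftover coordinate of absolute value $1$, which cannot be killed by any single root of $\spn$. This is precisely where Lemma \ref{l:evendiff} and the definition of $\ltct{s}$ as the even-$\kappa$ sub-lattice enter essentially, reflecting that short roots $\pm e_i \pm e_j$ (rather than bare $\pm e_i$) are the only rank-one moves available in the symplectic algebra.
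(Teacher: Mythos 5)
Your proof is correct, and its core is the same constructive decomposition the paper uses: exhaust the even part of each coordinate difference with long roots $\pm 2e_i$ and pair off the residual odd coordinates with short roots $\pm e_i \pm e_j$, for a total of $\sum_i |l_i-l_i'|/2$ steps. Where you differ is in two respects, both to your advantage. First, the paper only exhibits one such decomposition and counts its length (organized as a three-way case analysis on bosonic--bosonic, bosonic--fermionic, and fermionic--fermionic pairs); it never argues that no shorter path exists. Your lower bound --- every root of $\spn$ has $\ell^1$ norm exactly $2$, so a $k$-edge path can displace the $\ell^1$ distance by at most $2k$ --- supplies the missing minimality argument and is what actually turns the construction into a proof of equality. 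Second, you replace the paper's case analysis by a single parity argument: since $\mu=2M+1$ is odd, Eq. (\ref{labeling}) gives $\kappa\equiv\sum_i l_i \pmod 2$, so the evenness of $\kappa-\kappa'$ forces the number of odd coordinate differences to be even, which is exactly what lets Phase B close regardless of the bosonic or fermionic character of the endpoints. You also check, as the paper does not, that the intermediate nodes of the constructed path remain in $\ltct{s}$ (coordinates move monotonically within $[0,\mu)$ and each move preserves the even-$\kappa$ property by Lemma \ref{l:evendiff}). The one point you inherit from the paper rather than prove is that $\ltct{s}$ consists of \emph{all} even-$\kappa$ nodes of $\ltc{\mu}{n}$, which is what guarantees the intermediate even-$\kappa$ nodes are genuinely in the symplectic lattice; this is the content of Theorem \ref{t:smpltc}, so citing it as you do is legitimate.
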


\begin{proof}[Proof of the theorem \ref{t:dstnc1}] The proof of the Eq. (\ref{distance1}) is presented considering distances between the following types of nodes: bosonic-bosonic; bosonic-fermionic; fermionic-fermionic. The distance between two bosonic nodes is demonstrated following the same scheme as the proof of the Theorem (\ref{t:dstnc}). The division by two is due to the fact the length of a single edge of the symplectic lattice is equal to two as given by Eq. (\ref{roots1}).

The distance between bosonic and fermionic nodes addressed by, respectively, $\lkpa$ and $\lkpal$, is given in terms of the vector $\Delta \lkpa = (\Delta l_1, \dots, \Delta l_n)$, with $\Delta l_i = |l_i-l_i'|$. We consider an integer $p$ and notice that the fermionic nodes have $2p$ components with coordinates given by odd numbers and the remaining $n-2p$ coordinates given by even numbers. For simplicity, we deal with the case of only two coordinates of the fermionic node given by odd numbers. In that case, $\Delta \lkpa = (\Delta l_1,\dots,\Delta l_i, \dots,\Delta l_j, \dots, \Delta l_n)$, with $\Delta l_i=2p_i+1$ and $\Delta l_j=2p_j+1$. The remaining elements $\Delta l_1, \Delta l_2, \dots, \Delta l_n$ are given by $2p_1, 2p_2, \dots, 2p_n$, respectively, and note that all $p$'s belong to $\left\{0,1,2,\dots\right\}$. The vector $\Delta \lkpa$ can be written as a linear combination of the form $p_1(2e_1)+\dots+(e_i+e_j) + p_i(2e_i) + p_j(2e_j)+\dots+p_n(2e_n)$. Note that the vectors $2e_1,\dots, 2e_n$ and $(e_i+e_j)$ all belong to the set of roots of the symplectic algebra, as defined at the Eq. (\ref{roots1}). Therefore, one combines $1+\sum_{i=1}^n p_i = \sum_{i=1}^{n}\Delta l_i / 2$ root vectors to connect a bosonic node to a fermionic one. It is straightforward to generalize this form for the case of $2p$ coordinates of $\Delta \lkpa$ given by odd numbers. That will result into a combination of $p+\sum_{i=1}^n p_i = \sum_{i=1}^{n}\Delta l_i / 2$ root vectors to connect the bosonic and the fermionic nodes.

The distances between fermionic nodes, addressed by $\lkpa$ and $\lkpal$ is also given in terms of the vector $\Delta \lkpa = (\Delta l_1, \dots, \Delta l_n)$, where the vector components, $\Delta l_i = |l_i-l_i'|$, can be written in two different forms: $2p_i$, with $p_i \in \{0,\dots,M\}$ or $2p_i+1$, with $p_i \in \{ 0,\dots, M-1\}$. Given a positive integer $q$, one has $2q$ coordinates of $\Delta \lkpa$ written as $2p_i+1$. Hence, similarly to the previous case, one combines $q+\sum_{i=1}^n p_i = \sum_{i=1}^{n}\Delta l_i / 2$ root vectors of the symplectic algebra to connect two fermionic nodes.
\end{proof}

The Figure \ref{dstncm} shows the distance matrices for the \rwl corresponding to the adjacency matrices presented at the Figure \ref{adjm}. The symmetry of the \rwl is reflected on the repeating patterns of the distance matrices. 

\subsection{Lattice density}

For a comparative analysis between different topologies it is also useful to define the network density, denoted by $\rho$. Let us consider a graph $\grph$ composed by $\nu$ nodes and with a diameter $L$. Then, if one may map that graph into a $n$-dimensional lattice it is possible to define a "`graph volume"' as $L^n$. Then we define a quantity named graph density by:
\begin{equation}
\rho = \frac{\nu}{L^n}.
\end{equation}
$\rho$ gives the amount of vertices packed into the volume of a certain graph topology. This quantity is useful to compare two network topologies having the same diameters and different number of nodes. The higher the density is, the better is the packing of the network topology, as it accommodates a bigger number of nodes.

\subsection{A comparison among the hypercubic, mesh, and symplectic topologies}

\subsubsection{Computation of the distance matrices}

We perform a comparative analysis of the three network topologies presented in the previous sections. That is useful to show the main characteristics of the symplectic topology in comparison with well known topologies as hypercubic and mesh. We shall denote the hypercubic graph by $\hgrp$, where $\nub$ is given by $\{0, \dots, 2^n - 1 \}$ 
and $n$ indicates the dimension of the hypercube. The mesh topology is denoted by $\mgrp$, where $\nub$ is given by $\{0, \dots, \mu^n - 1 \}$ and $n$ indicates 
the dimension of the mesh. Finally, we denote the symplectic graph by $\sgrp$, with nodes labeled by an integer $\{0,\dots,\frac{(2M+1)^n-1}{2}\}$, $n$ defining the rank, and $M$ defining the maximum weight of the anti-symmetric irrep of the symplectic algebra. The adjacency matrix for the graphs $\hgrp$ and $\mgrp$ can be constructed with the help of the Eq. (\ref{inverse-label}) and the Theorem (\ref{t:dstnc}). For the case of the $\sgrp$ one first multiplies the labels of the nodes by two and then applies the Eq. (\ref{inverse-label}) and the Theorem (\ref{t:dstnc1}).

\subsubsection{Graph diameters and vertex connectivity}

Since these two properties are well-known for the mesh and the hypercube, we just present our results for the symplectic topology. The graph diameter of the symplectic topology can be obtained by converting the labels $0$ and $2 \times \frac{(2M+1)^n-1}{2}$ into addresses of the lattice, by using the Eq. (\ref{inverse-label}). Then one uses the Theorem (\ref{t:dstnc1}) for calculating the symplectic lattice diameter as $L_s = M\, n$ as it can be verified by inspection of the Fig. \ref{smpltc}. To evaluate the connectivity, we consider the $n$-dimensional hypercube enveloping the anti-symmetric symplectic lattice. The nodes at the vertices of this hypercube have connectivity given by $n(n+1)/2$ which is the result of summing the all strictly positive root vectors of a symplectic algebra. The maximal connectivity of the symplectic algebra is given by $\epsilon = 2n^2$, which is the total amount of positive and negative roots of a symplectic algebra.

We may resume the properties of the hypercubic, mesh and symplectic topologies in a table. The hypercubic topology is indicated by $H(2,n)$, with $n$ denoting the dimension of the hypercube. The mesh lattice with length $\mu$ and dimension $n$ is denoted by $H(\mu, n)$. The anti-symmetric weight lattice of the irrep of a rank $n$ symplectic algebra, labeled by an $n$-tuple $(M,\dots,M)$ is indicated by $S(M,n)$.

\begin{table}[!ht]
\centering
\begin{tabular}{|c|c|c|c|}
\hline
 & $H(2,n)$ & $H(\mu,n)$ & $S(M,n)$ \\
\hline
\hline
$\nu$ & $2^n$ & $\mu^n$ & $\displaystyle \frac{(2M+1)^n+1}{2}$ \\
\hline
$L$ & $n$ & $(\mu-1)\,n$ & $M n$ \\
\hline
$\overline{\epsilon}$ & $n$ & $2n$ & $2n^2$ \\
\hline
$\underline{\epsilon}$ & $n$ & $n$ & $\frac{n(n+1)}{2}$ \\
\hline
$\rho$ & $\displaystyle \left(\frac{2}{n} \right)^n$ & $\displaystyle \left(\frac{\mu}{(\mu-1) \, n} \right)^n$ & $\displaystyle \frac{1}{2} \displaystyle \frac{(2M+1)^n + 1}{(Mn)^n}$  \\ [2ex]
\hline
\end{tabular}
\vspace*{10pt}
\caption{A table resuming the main characteristics of the topologies evaluated in this manuscript. $\nu$ indicates the total number of nodes of a network; $L$ indicates the network diameter; $\overline{\epsilon}$ ($\underline{\epsilon}$) gives the maximum (minimum) connectivity of a node of the network; $\rho$ gives the network density.}
\label{tablechar}
\end{table}

\subsubsection{Density for the mesh and hypercubic topologies}

Let us consider the variables of the Tab. \ref{tablechar} for each topology. We start with a comparison between the lattices presenting the same amount of nodes. $H(\mu,n')$ and $H(2,n)$ have the same number of nodes for $\mu = 2^m$ and $n'=n/m$. The ratio between the densities of the hypercubic ($\rho_h$) and the mesh ($\rho_m$) topologies results $\rho_h / \rho_m = \left[ 2(1-1/\mu) \right]^{n}$. The proportionality factor tends to infinite for two conditions, a finite $\mu$ and $n \rightarrow \infty$ or for the opposite, that is $\mu \rightarrow \infty$ and a finite $n$. Hence, the hypercubic topology provides a higher density than the mesh topology and can be used as a reference interconnect topology.

\begin{figure}[!ht]
\centering
\includegraphics[width=0.45\linewidth]{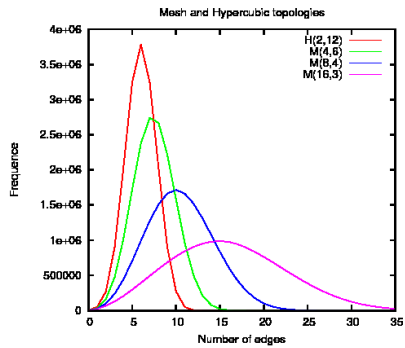}
\caption{The path length distribution for the hypercubic and mesh topologies for $\nu=4096$.}
\label{meshyp}
\end{figure}

The Figure \ref{meshyp} shows the comparison between the hypercubic and the mesh topologies for $4096$ nodes. The hypercubic topology shows a higher packing in comparison with the mesh topology. Notice that as the number of dimensions of the mesh lattice decreases and $\mu$ increases, the spreader the distribution of distances becomes.

\subsubsection{Hypercubic and symplectic topologies of the same diameter -- I} The number of nodes on the symplectic and the hypercubic topologies cannot be the same. Hence we compare lattices $H(2,n)$ and $S(M,n')$ such that their diameters are the same. Hence, we have the condition $n=M\,n'$. We fix $M$ to be an integer greater than one, such that $n'=\frac{n}{M}$.

\begin{figure}[!h]
\centering
\includegraphics[width=0.45\linewidth]{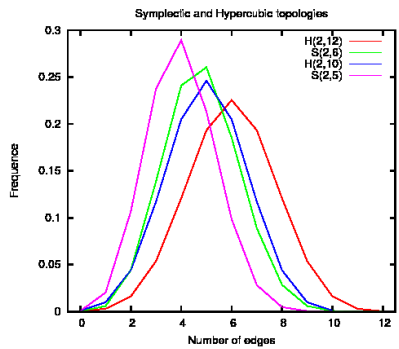}
\caption{The hypercubic and the symplectic topologies have the same network diameters, $L=10$ and $L=12$. The number of nodes for the hypercubic topologies are 1024 and 4096, respectively, at dimensions 10 and 12. The corresponding symplectic lattices are at the dimensions 5 and 6, with the number of nodes given by 1563 and 7813.}
\label{hypsym}
\end{figure}

The Figure \ref{hypsym} shows the result of a comparison between the $H(2,n)$ and $S(M,n/M)$ topologies. The path length distributions of the two graphs will present the same diameter $n$. The symplectic topology have left displaced path length distributions in comparison with a hypercubic topology of the same diameter. On the other hand, the symplectic topology accommodates a bigger amount nodes and, hence, has a higher density than the hypercubic arrangement.

\begin{figure}
  \centering
  \includegraphics[width=0.45\linewidth]{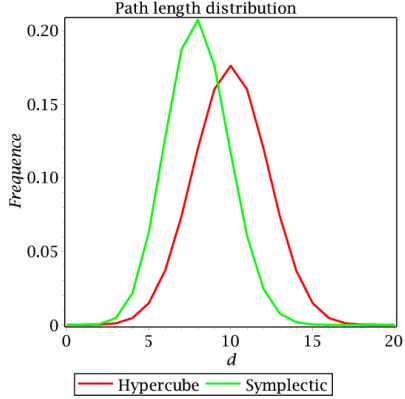} \vspace{1cm}
  
      \caption{Comparison between the symplectic and the hypercubic topologies. We consider the hypercubic graph for $n=20$ which results on 1,048,576 vertices. The graph diameter is 20 while the average path distance is 11 edges. The symplectic graph has parameters $n=10$ and $M=2$ which results on 4,882,813 vertices. The graph diameter is also 20 with average path distance given by 8.}
\label{f:n20}
\end{figure}

The Figure \ref{f:n20} shows a comparison on the path length distribution between the hypercubic and symplectic topologies. We are considering the 20 dimensional hypercube with $\sim$ 1M vertices. We fixed a symplectic topology having a diameter 20, the same as the hypercube, and constructed a graph having $\sim$ 4.8M vertices. Again, the path length distribution on the symplectic topology is displaced to the left in comparison with the hypercubic. Therefore, the average path length on the symplectic topology is smaller than on the hypercubic. Furthermore, the shape of the distributions of path length on the symplectic topologies is preserved and it permits us to extrapolate our results for \rwl having a greater number of nodes.

\subsubsection{Hypercubic and symplectic topologies of the same diameter -- II}

The Figure \ref{hypsym1} shows a comparison between the symplectic and the hypercubic topologies for the same value of $n$ and diameter $L$. Hence, we have $M=1$ and the path length distributions of the symplectic topology shall be thinner than that for the hypercube. Again, the symplectic topology shall provide a greater number of nodes and a higher density. Indeed, let us compare the symplectic ($\rho_s$) and the hypercubic ($\rho_h$) densities in terms of their ratio for the same dimension $n$. That results $$\frac{\rho_s}{\rho_h} = \frac{1}{2}\frac{(2M+1)^n+1}{(2M)^n}.$$ For $n>1$ and $M$ finite the ratio between the two densities is greater than one. As $M$ goes to infinity, the density of the hypercubic topology becomes greater than the symplectic by a factor two. That is because the symplectic lattice's diameter shall also go to infinity. It is striking, though, that the hypercubic topology shall be, maximally, only as twice as more ``packed'' than the symplectic.

\begin{figure}[!h]
\centering
\includegraphics[width=0.45\linewidth]{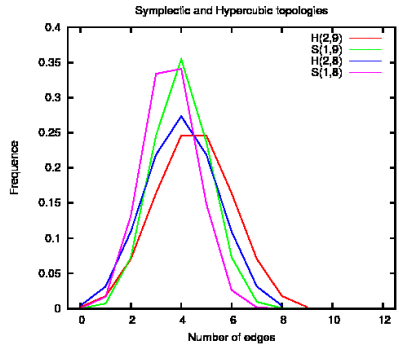}
\caption{The hypercubic and the symplectic topologies have the same diameters, $L=8$ and $L=9$. The number of nodes for the hypercubic topologies are 256 and 512, respectively, in dimensions 8 and 9. The corresponding symplectic lattices are in the same dimensions, 8 and 9, have 3281 and 9842 nodes, respectively.}
\label{hypsym1}
\end{figure}

Furthermore, one may notice that for the case of the Figure \ref{hypsym1} the ratio between the two densities results into $\frac{3^n+1}{2^{n+1}}$ and this number grows with $n$. Therefore we show that, for the same network diameter, the symplectic topology provides us with a systematic approach to combine a much bigger amount of nodes in a network.

\subsubsection{Comparing the symplectic and the mesh topologies}

The Figure \ref{symesh} shows a comparison between the symplectic and the mesh topologies when they have the same dimension $n$ and diameter $L$. As it can be viewed on the Table \ref{tablechar} that imposes the topologies $S(M,n)$ and $H(\mu,n)$ to obey $M=\mu-1$. The Figure \ref{symesh} shows the path length distributions for the two topologies. Inspection is enough to notice that the symplectic topology generates a distance distribution that is slightly thinner and displaced to the left in comparison with the mesh.

\begin{figure}
\centering
\includegraphics[width=0.45\linewidth]{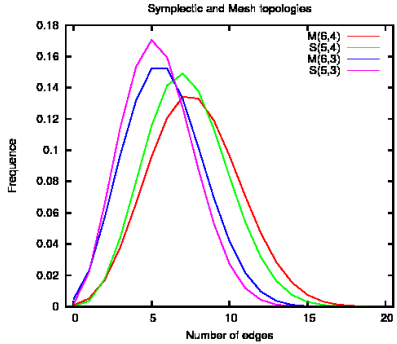}
\caption{The mesh and the symplectic topologies have the same network diameters, $L=20$ and $L=15$. The mesh topology have 216 and 1296 nodes while the corresponding symplectic topology has 666 and 7321, both lattices respectively at dimensions 3 and 4.}
\label{symesh}
\end{figure}

Again, the symplectic topology provides a strategy to accommodate a greater amount of nodes than the mesh in the same lattice volume. That is verified by evaluating the ratio between the mesh ($\rho_m$) and symplectic ($\rho_s$) lattice densities. That results into $$\frac{\rho_s}{\rho_m} = 2^{n-1} \left[ \left( 1-\frac{1}{2\mu}\right)^n+1\right]$$. For $\mu$ going to infinity the ratio between the two densities tends to be $2^n$. For a fixed $\mu$ and $n>>1$ that ratio grows as $2^{n-1}$. Either way the symplectic density is higher than that of the mesh.

\section{Conclusions}

In this manuscript we have shown that group theoretical tools may play a pivotal role on the design of network interconnects for supercomputers. That was shown in terms of the roots and weights lattices of the irreps of the symplectic algebras, one of the classical symmetries in the Cartan classification. A strategy to establish the correspondence between those lattices and the graphs was established. The use of roots and weights lattices turns the calculation of the distances between its nodes into a task based on the "`taxicab geometry"'. Therefore, it simplifies the calculation of the vertices' distances on the corresponding graph. The usefulness of such a technique can be appreciated on a more intricate graph, as it is the case of the here introduced symplectic topology. The analysis of the symplectic topology has demanded simple geometrical concepts instead of eventually cumbersome discrete mathematical methods. 

To understand the applicability of the symplectic topologies we have compared it with the well known mesh and hypercubic topologies on Table \ref{tablechar}. As a first step, we have compared the main characteristics of those two families of topologies, such as the number of vertices, the diameter, and maximal (and minimal) vertex degree. We have verified that for a given diameter, the symplectic topology provides graphs which number of vertices shall be greater than that of the mesh and the hypercube. 

A coarse-grained understanding of the differences between the symplectic, mesh and hypercubic topologies can be obtained by considering the concept of lattice density, as we have introduced. That is the ratio of the number of nodes of the lattice for its volume on the $n$ dimensional Cartesian space. Here, the volume is defined as $L^n$ and the lattice density gives the nodes concentration. A high lattice density implies on smaller node to node distance measure in terms of the number of edges connecting them. We also warn the reader that this measure do not takes into account more delicate information about the lattice structure. The node to node path length in a lattice obeys a distribution and a more precise analysis of the lattice shall rely on the investigation of those distributions.

Indeed, we have carried out such an analysis and show it on the Figures \ref{meshyp}, \ref{hypsym}, \ref{f:n20}, \ref{hypsym1}, \ref{symesh}. We have structured our analysis considering two possibilities: graphs having the same number of vertices and graphs having the same diameter. The Figure \ref{meshyp} shows the comparison between distances distributions considering the mesh and hypercubic topologies. One observes that the hypercubic topology provides a reference topology for a graph having small diameter and big number of vertices. Another analysis is introduced by considering different topologies having the same diameter. That was shown on the Figures \ref{hypsym}, \ref{f:n20}, \ref{hypsym1}, \ref{symesh}. Exclusive inspection of the path length distributions maybe misleading due to their similarity on shape. That analysis is complemented by the consideration of the number of vertices of the graph or their density. Hence, it is straightforward to conclude that the symplectic topology enables the reduction the typical node-node distance in a graph with a systematic procedure for the construction of its adjacency.

\section*{Acknowledgments}
AFR thanks CAPES, FAPESP and Science Foundation for Youths of Shandong Academy of Sciences of China (N0. 2014QN010)  for financial support. The authors are thankful for Changnian Han for help with parallel code.

\end{document}